\newcommand{\beqn}{\begin{equation}}
\newcommand{\eeqn}{\end{equation}}
\newtheorem{thm}[equation]{Theorem}
\begin{document}
\title{Transparency of Strong Gravitational Waves}
\author{Y. Hadad, V. Zakharov}
\affiliation{Department of Mathematics, University of Arizona, Tucson, Arizona, 85721 USA}


\date{January 28th, 2014}


\begin{abstract}
This paper studies diagonal spacetime metrics. It is shown that the overdetermined Einstein vacuum equations are compatible if one Killing vector exists. The stability of plane gravitational waves of the Robinson type is studied. This stability problem bares a fantastic mathematical resemblance to the stability of the Schwarzschild black hole studied by Regge and Wheeler. Just like for the Schwarzschild black hole, the Robinson gravitational waves are proven to be stable with respect to small perturbations. We conjecture that a bigger class of vacuum solutions are stable, among which are all gravitational solitons. Moreover, the stability analysis reveals a surprising fact: a wave barrier will be transparent to the Robinson waves, which therefore passes through the barrier freely. This is a hint of integrability of the $1+2$ vacuum Einstein equations for diagonal metrics.
\end{abstract}

\pacs{02.30.Ik,02.30.Jr,05.45.Yv,04.20.-q,04.30.-w,04.70.Bw}

\maketitle

\section{Introduction} \label{sec:Introduction}
In the theory of relativity, the Einstein-Hilbert action is
\begin{equation} \label{eq:EinsteinHilbertAction}
S = \frac{1}{2} \int R \sqrt{-g} d^4 x
\end{equation}
where $R$ is the scalar curvature of the spacetime metric $g_{\mu\nu}$, $g$ is the determinant of $g_{\mu\nu}$ and the integration is performed over the four-dimensional spacetime. Varying the Einstein-Hilbert action (\ref{eq:EinsteinHilbertAction}) with respect to the inverse metric $g^{\mu\nu}$ gives Einstein's vacuum equations,
\begin{equation} \label{eqs:EinsteinVacuum}
R_{\mu\nu}=0
\end{equation}
where $R_{\mu\nu}$ is the Ricci curvature tensor. Einstein's vacuum equations determine the evolution of the spacetime metric $g_{\mu\nu}$ in empty space.

This paper focuses on \emph{diagonal} spacetime metrics. These are metrics that can be written in the form
\begin{equation} \label{eq:DiagonalMetric}
g_{\mu\nu} = (H_\mu) ^2 \delta_{\mu\nu}
\end{equation}
where $\delta_{\mu\nu}$ is the Kroncker delta. Here and in the rest of this paper, Einstein's summation convention is \emph{not} used. In matrix form, the diagonal metric is
\begin{equation} \label{eq:DiagonalMatrix}
g_{\mu\nu}=\begin{bmatrix}(H_0) ^2 & 0 & 0 & 0 \\ 0 & (H_1) ^2 & 0 & 0 \\ 0 & 0 & (H_2) ^2 & 0 \\ 0 & 0 & 0 & (H_3) ^2 \end{bmatrix}.
\end{equation}
It is convenient not to worry about the sign of the metric. Instead, one may restore the proper metric signature $(-+++)$ by substituting $H_0 \rightarrow i H_0$.

It is a well known result that \emph{every} metric $g_{\mu\nu}$ may be diagonalized at \emph{any} given event of spacetime (e.g. by using Riemann normal coordinates) \cite{bib:Wald}. Nevertheless, this is a local result, which holds globally only for very specific spacetime metrics. This means that the class of metrics that can be casted into the diagonal form (\ref{eq:DiagonalMetric}) \emph{globally} should be expected to have \emph{unique features}. It is important to keep in mind that the \emph{diagonality} of the metric \emph{is not an invariant property}. In other words, some non-diagonal metrics $g_{\mu\nu}$ may  be transformed to the diagonal form (\ref{eq:DiagonalMetric}) by a proper choice of coordinates.

The metric (\ref{eq:DiagonalMetric}) describes a wide range of physical phenomena. In particular, it includes the Schwarzschild black hole \cite{bib:Schwarzschild}, the Kasner metric \cite{bib:Kasner}, the Friedmann-Robertson-Walker model of cosmology \cite{bib:ExactSolutions}, the Milne model of cosmology \cite{bib:ExactSolutions}, a certain class of single-polarized plane gravitational waves \cite{bib:ExactSolutions} and special cases of gravitational solitons \cite{bib:BelinskiVerdaguer}.

The goal of the rest of this paper is to study the system of vacuum Einstein Eqs. (\ref{eqs:EinsteinVacuum}) for diagonal metrics (\ref{eq:DiagonalMetric}). In particular, the mathematical structure of the equations and their physical implications on gravitational waves are emphasized.

Section \ref{sec:FieldEquations} includes a derivation of the Einstein equations in the case of the diagonal metric (\ref{eq:DiagonalMetric}). A convenient form for analyzing the equations is obtained. Section \ref{sec:Compatibility} shows that if at least one Killing vector exists, Einstein's equations for diagonal metric are compatible. In section \ref{sec:PlaneGravitationalWaves}, plane gravitational waves are studied. A simple criteria for asymptotic flatness and compatibility of the field equations for plane waves are derived. One of the most famous examples of such plane waves is the Bondi-Pirani-Robinson (BPR) waves \cite{bib:RobinsonBondiWaves}. In section \ref{sec:Stability} it is proven that such waves are stable with respect to diagonal perturbations that depend on $1+2$ coordinates.

As a concrete example, in section \ref{sec:Transparency} a BPR wave with soliton-like properties is studied. The emitted (perturbation) wave is shown to travel through the BPR wave \emph{without any reflection} and \emph{independently of the amplitude of the BPR wave}. The latter implies that a \emph{strong (BPR) gravitational wave would be transparent} to the perturbation wave. The only remnant of the collision is a phase shift which depends on the angle between the two waves. These properties, which are typically exhibited by solitons, \emph{suggest that the $1+2$ vacuum Einstein equations for diagonal metrics are integrable}, similarly to the $1+1$ vacuum Einstein equations \cite{bib:BelinskiZakharov1, bib:BelinskiZakharov2}.

\section{The field equations} \label{sec:FieldEquations}

For the diagonal metric (\ref{eq:DiagonalMetric}), the inverse metric is
\begin{equation}
g^{\mu\nu} = \frac{1}{(H_\mu) ^2} \delta^{\mu\nu}
\end{equation}
and the Christoffel symbols are
\begin{eqnarray}
\Gamma_{\mu\nu} ^\lambda &=& 0 \\ \notag
\Gamma_{\mu \nu} ^\mu &=& \partial _\nu \left(\ln H_\mu \right) \\ \notag
\Gamma_{\mu\mu} ^\nu &=& -\frac{1}{(H_\nu) ^2} H_\mu \partial_\nu H_\mu
\end{eqnarray}
where $\mu,\nu,\lambda$ are assumed to be mutually exclusive indices ($\mu\neq \nu$, $\mu \neq \lambda$, $\nu\neq \lambda$). Define the rotation coefficients
\begin{equation} \label{eq:RotationCoefficients}
Q_{\mu\nu} = \frac{1}{H_\nu} \partial_\nu H_\mu
\end{equation}
with which one can write the off-diagonal Ricci curvature tensor as
\begin{equation} \label{eq:NonDiagonalRicci}
R_{\mu \nu} = -\sum_{\lambda \neq \mu,\nu} \frac{H_\mu}{H_\lambda} \left(\partial_\nu Q_{\lambda \mu} - Q_{\lambda \nu} Q_{\nu \mu}\right)
\end{equation}
for $\mu\neq \nu$. As for the diagonal elements, the Ricci tensor gives
\begin{equation} \label{eq:DiagonalRicci}
R_{\mu\mu} = -\sum_{\nu \neq \mu} \frac{H_\mu}{H_\nu } E_{\mu\nu}
\end{equation}
where 
\begin{equation} \label{eqs:ECoefficients}
E_{\mu \nu} = \partial_\nu Q_{\mu \nu} + \partial_\mu Q_{\nu\mu} + \sum_{\lambda\neq \mu,\nu} Q_{\mu \lambda} Q_{\nu \lambda}
\end{equation}
The scalar curvature is
\begin{equation}
R=-2 \sum _{\mu<\nu} \frac{E_{\mu \nu}}{H_\mu H_\nu}.
\end{equation}
Since the determinant of the metric is $\det g=(H_0 H_1 H_2 H_3)^2$, the Einstein-Hilbert action (\ref{eq:EinsteinHilbertAction}) is
\begin{equation}
S = - i \sum_{\mu \neq \nu \neq \lambda \neq \sigma} \int E_{\mu\nu} H_\lambda H_\sigma d^4 x.
\end{equation}
If one performs an integration by parts to remove the second order derivatives in $E_{\mu\nu}$, a very concise formula for the Einstein-Hilbert action in terms of the metric coefficients only is obtained,
\begin{equation} \label{eq:DiagonalAction}
S = \sum _\mu \int \frac{i}{H_\mu} \sum_{\nu \neq \lambda \neq \sigma \neq \mu} H_\nu (\partial_\mu H_\lambda) (\partial_\mu H_\sigma) d^4 x.
\end{equation}

The reader should not be alarmed by the appearance of the imaginary root of unity $i=\sqrt{-1}$. It is there due to the signature of the metric and the transformation $H_0 \rightarrow i H_0$, which was mentioned after Eq. (\ref{eq:DiagonalMatrix}), reveals immediately that the action (\ref{eq:DiagonalAction}) is manifestly real-valued as expected.

\section{Compatibility} \label{sec:Compatibility}
When studying \emph{general} metrics, the symmetric $g_{\mu\nu}$ has ten elements, four of which may be eliminated through the use of gauge transformations. This makes the vacuum Einstein equations $R_{\mu\nu}=0$ an overdetermined system of ten equations for six unknowns. In normal circumstance this might raise the question of compatibility. Nevertheless, this is not an issue, as one can prove using the four Bianchi identities that the vacuum Einstein equations are indeed compatible \cite{bib:Wald}.

However, the situation is rather different when discussing diagonal metrics (\ref{eq:DiagonalMetric}). In this case, Einstein's vacuum equations $R_{\mu\nu}=0$ give ten equations again, but this time for only \emph{four} unknown functions $H_\mu$ ($\mu=0,1,2,3$). In this case, the usual argument using the Bianchi identities ceases to hold, and an important question thus arises:
\emph{are the Einstein's vacuum equations for diagonal metrics compatible?}

Consider the diagonal metric in Eq. (\ref{eq:DiagonalMetric}) with the additional assumption that it is independent of $x^3$. Mathematically, this means that the metric has the Killing vector $\partial_3$ and depends on the three coordinates $x^0,x^1,x^2$ only. In this case, the off-diagonal terms of the Ricci curvature tensor (\ref{eq:NonDiagonalRicci}) give only three independent equations
\begin{equation} \label{eqs:DiagonalEinstein3DOffDiagonal}
R_{01}=R_{02}=R_{12}=0
\end{equation}
coupled to the four diagonal equations
\begin{equation}\label{eqs:DiagonalEinstein3DDiagonal}
R_{\mu\mu}=0
\end{equation}
for $\mu = 0,1,2,3$. Eqs. (\ref{eqs:DiagonalEinstein3DOffDiagonal}) and (\ref{eqs:DiagonalEinstein3DDiagonal}) will be referred to as \emph{the $1+2$ vacuum Einstein equations for diagonal metrics}. The $1+2$ vacuum Einstein equations for diagonal metrics form an overdetermined system of seven equations for four unknown functions. As an overdetermined system, the compatibility of the seven equations must be proven, as it does not follow from the argument typically used for non-diagonal metrics. The authors could not find any evidence for such a result in the literature. Whether the $1+2$ Einstein equations for diagonal metrics are indeed compatible is a very natural question to ask, as such metrics have many applications in cosmology and astronomy, some of which will be described in the next sections. Fortunately, it turns out that the answer is affirmative, as the next theorem proves.

\begin{thm} \label{thm:Compatibility}
The $1+2$ Einstein equations for diagonal metrics (\ref{eqs:DiagonalEinstein3DOffDiagonal},\ref{eqs:DiagonalEinstein3DDiagonal}) are compatible.
\end{thm}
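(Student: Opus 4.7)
The natural strategy is the standard Bianchi-identity compatibility argument, adapted to the diagonal/Killing setting. First, I would treat $x^0$ as a time coordinate and separate the seven equations into four \emph{evolution} equations and three \emph{constraints}. Inspection of (\ref{eq:DiagonalRicci})--(\ref{eqs:ECoefficients}) shows that the diagonal equations $R_{\mu\mu}=0$ contain the second time derivatives $\partial_0^2 H_\nu$ for $\nu=1,2,3$ (the fourth, $\partial_0^2 H_0$, does not appear, reflecting the fact that $H_0$ plays the role of a lapse), so these four are naturally the evolution equations. The off-diagonal equations $R_{01}=R_{02}=R_{12}=0$, which I denote collectively by $C_I$ ($I=1,2,3$), are at most first-order in $\partial_0$ and play the role of constraints. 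The remaining off-diagonal components $R_{03}=R_{13}=R_{23}=0$ are identities by (\ref{eq:NonDiagonalRicci}) once the Killing condition $\partial_3=0$ and the vanishing $Q_{\lambda 3}=0$ are used; this is exactly the reduction from ten equations to seven.

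Next I would invoke the contracted Bianchi identities $\nabla^\mu G_{\mu\nu}=0$. The $\nu=3$ identity is automatically satisfied, since $\partial_3$ of everything vanishes and $G_{\mu 3}=0$ for $\mu\neq 3$. The three remaining identities ($\nu=0,1,2$) are first-order differential relations among the $R_{\mu\nu}$'s. After substituting the four evolution equations $R_{\mu\mu}=0$, these collapse into a closed, linear, first-order PDE system for the constraint triple $C_I$, of schematic form
\[
\partial_0 C_I = A_I{}^J C_J + B_I{}^{Jk}\,\partial_k C_J,
\]
with coefficients depending on the $H_\mu$ and their first derivatives. Because the principal part is inherited from the diagonal Lorentzian metric, the system is hyperbolic.

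The proof is then finished by two standard observations. Uniqueness of solutions of this hyperbolic system with vanishing initial data implies that the constraints \emph{propagate}: once $C_I=0$ hold on a Cauchy surface $x^0=t_0$, they hold throughout the domain of dependence of that surface. Admissible initial data exist, since imposing three scalar constraints on the eight functions $H_\mu, \partial_0 H_\mu$ of the two variables $(x^1,x^2)$ still leaves considerable freedom. Together, these steps produce local solutions of the full overdetermined system, establishing its compatibility.

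The main obstacle is the explicit verification in the second step: once the evolution equations have been substituted, one must check that no second-order derivatives of the $H_\mu$'s survive in the Bianchi relations, so that the constraints genuinely close under $\partial_0$-differentiation. This is a bookkeeping calculation in the rotation coefficients $Q_{\mu\nu}$ and the $E$-symbols of (\ref{eqs:ECoefficients}), guaranteed in principle by the generality of the Bianchi identities but requiring careful index manipulation because the diagonal ansatz leaves no residual gauge freedom that could be used to simplify it.
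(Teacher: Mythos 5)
Your proposal rests on the standard Bianchi/constraint-propagation argument, but this is precisely the argument that breaks down for diagonal metrics, which is why the paper does not use it. The trouble is the equation $R_{12}=0$. In the unconstrained theory, $R_{12}=0$ is the \emph{evolution} equation for the component $g_{12}$; in the diagonal ansatz $g_{12}\equiv 0$, so $R_{12}=0$ becomes an extra condition on $H_1,H_2$ that is neither an evolution equation nor one of the standard ADM constraints. Now count the Bianchi identities available after imposing the four diagonal equations $R_{\mu\mu}=0$ (which force $R=0$, so $\nabla^\mu R_{\mu\nu}=0$): the $\nu=0$ identity contains $\partial_0 R_{00}=0$ and hence degenerates into a purely \emph{spatial} relation among $R_{01},R_{02},R_{12}$; the $\nu=1$ and $\nu=2$ identities give $\partial_0 R_{01}$ and $\partial_0 R_{02}$ in terms of spatial derivatives of $R_{12}$ and algebraic terms; and \emph{no identity controls} $\partial_0 R_{12}$. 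So the schematic system $\partial_0 C_I = A_I{}^J C_J + B_I{}^{Jk}\partial_k C_J$ you assert does not close: you have three unknown constraint functions and only two genuine evolution relations for them, plus one relation with no time derivative at all. A second, related miscount: the four diagonal equations contain only the three second time derivatives $\partial_0^2 H_1,\partial_0^2 H_2,\partial_0^2 H_3$, so they are not a well-posed $4\times 4$ evolution system --- one combination is the Hamiltonian constraint, $H_0$ is evolved by nothing, and (as you yourself note at the end) the diagonal ansatz leaves essentially no residual gauge freedom with which to prescribe $H_0$ as a free lapse. Declaring all four diagonal equations to be ``evolution'' while also calling $H_0$ a lapse is internally inconsistent.

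The paper's proof avoids all of this by a direct cross-derivative computation in the adapted variables $\alpha,\beta,\gamma,\Lambda$ (with $H_3=e^{\Lambda}$ singled out by the Killing vector). The off-diagonal equations (\ref{eqs:NonDiagonal}) each isolate one mixed second derivative ($\partial_0\partial_1\alpha$, $\partial_0\partial_2\beta$, $\partial_1\partial_2\gamma$), and the diagonal equations become (\ref{eqs:abc}) plus the divergence-form equation (\ref{eq:Lambda}). One then differentiates each equation of (\ref{eqs:abc}) in the single remaining direction ($x^2$, $x^1$, $x^0$ respectively), eliminates the resulting third-order terms using (\ref{eqs:NonDiagonal}), and checks by explicit algebra that every one of the $38$ terms cancels. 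That verification --- not a Bianchi identity --- is the substance of the theorem, and it is the step your outline does not supply. If you want to salvage your route, you would have to produce, by direct computation in the rotation coefficients $Q_{\mu\nu}$, an independent expression for $\partial_0 R_{12}$ modulo the other six equations; that computation is essentially equivalent in labor to the one the paper performs.
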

\begin{proof}
Proving the statement of this theorem using the original degrees of freedom $H_0,H_1,H_2$ and $H_3$ is rather tedious. Instead, it is much easier to exploit the special role of $H_3$ as the degree of freedom that corresponds to the Killing vector $\partial_3$. Define, 
\begin{eqnarray}
H_0 = e^{-\Lambda} \gamma && H_1 = e^{-\Lambda} \beta \hspace{.4cm} \\ \notag
H_2 = e^{-\Lambda} \alpha && H_3 = e^{\Lambda}.
\end{eqnarray}
Using the new degrees of freedom $\alpha,\beta,\gamma$ and $\Lambda$, the off-diagonal Einstein Eqs. (\ref{eqs:DiagonalEinstein3DOffDiagonal}) are
\begin{eqnarray}
\label{eqs:NonDiagonal}
\\ \partial_0 \partial_1 \alpha &=& -2\alpha (\partial_0 \Lambda) (\partial_1 \Lambda) + \frac{(\partial_0 \beta) (\partial_1 \alpha)}{\beta} + \frac{(\partial_0 \alpha) (\partial_1 \gamma)}{\gamma}  \nonumber \\
\partial_0 \partial_2 \beta &=& -2 \beta (\partial_0 \Lambda) (\partial_2 \Lambda) + \frac{(\partial_0 \alpha) (\partial_2 \beta)}{\alpha} + \frac{(\partial_0 \beta) (\partial_2 \gamma)}{\gamma} \nonumber \\
\partial_1 \partial_2 \gamma &=& -2\gamma (\partial_1 \Lambda) (\partial_2 \Lambda) + \frac{(\partial_1 \alpha) (\partial_2 \gamma)}{\alpha} + \frac{(\partial_1 \gamma) (\partial_2 \beta)}{\beta} \nonumber.
\end{eqnarray}
As for the diagonal Eqs. (\ref{eqs:DiagonalEinstein3DDiagonal}), it is convenient to represent them in an equivalent form through the variational formulation. The Lagrangian density of the Einstein-Hilbert action (\ref{eq:DiagonalAction}) is now
\begin{eqnarray} \label{eq:LagrangianDiagonal}
\mathcal{L} &=& 2 \left[\frac{\alpha \beta}{\gamma} (\partial_0 \Lambda)^2 - \frac{\alpha \gamma}{\beta} (\partial_1 \Lambda)^2 - \frac{\beta \gamma}{\alpha} (\partial_2 \Lambda)^2 \right. \\ \notag
&& \left. \hspace{0.4cm}- \frac{(\partial_0 \alpha) (\partial_0 \beta)}{\gamma} + \frac{(\partial_1 \alpha) (\partial_1 \gamma)}{\beta} + \frac{(\partial_2 \beta)(\partial_2 \gamma)}{\alpha}\right].
\end{eqnarray}
The variations $\frac{\delta S}{\delta \alpha}=\frac{\delta S}{\delta \beta}=\frac{\delta S}{\delta \gamma}=0$ give three of the diagonal equations
\begin{widetext}
\begin{eqnarray} \label{eqs:abc}
\beta \partial_0 \partial_0 \beta - \gamma \partial_1 \partial_1 \gamma &=& -\beta^2 (\partial_0 \Lambda)^2 + \gamma^2 (\partial_1 \Lambda)^2 - \frac{\beta^2 \gamma^2}{\alpha^2} (\partial_2 \Lambda)^2 + \frac{\beta}{\gamma} (\partial_0 \beta)(\partial_0 \gamma) - \frac{\gamma}{\beta} (\partial_1 \beta)(\partial_1 \gamma) + \frac{\beta \gamma}{\alpha^2} (\partial_2 \beta)(\partial_2 \gamma) \\ \notag
\alpha \partial_0 \partial_0 \alpha - \gamma \partial_2 \partial_2 \gamma &=& -\alpha^2 (\partial_0 \Lambda)^2 - \frac{\alpha^2 \gamma^2}{\beta^2} (\partial_1 \Lambda)^2 + \gamma^2 (\partial_2 \Lambda)^2 + \frac{\alpha}{\gamma} (\partial_0 \alpha)(\partial_0 \gamma) + \frac{\alpha \gamma}{\beta^2} (\partial_1 \alpha)(\partial_1 \gamma) - \frac{\gamma}{\alpha} (\partial_2 \alpha)(\partial_2 \gamma) \\ \notag
\alpha \partial_1 \partial_1 \alpha + \beta \partial_2 \partial_2 \beta &=& - \frac{\alpha^2 \beta^2}{\gamma^2} (\partial_0 \Lambda)^2 - \alpha^2 (\partial_1 \Lambda)^2 - \beta^2 (\partial_2 \Lambda)^2 + \frac{\alpha \beta}{\gamma^2} (\partial_0 \alpha)(\partial_0 \beta) + \frac{\alpha}{\beta} (\partial_1 \alpha)(\partial_1 \beta) + \frac{\beta}{\alpha} (\partial_2 \alpha)(\partial_2 \beta),
\end{eqnarray}
\end{widetext}
while the last diagonal equation, $\frac{\delta S}{\delta \Lambda}=0$ is
\begin{equation} \label{eq:Lambda}
\partial_0 \left(\frac{\alpha\beta}{\gamma} \partial_0 \Lambda\right)-\partial_1 \left(\frac{\alpha\gamma}{\beta} \partial_1 \Lambda\right)-\partial_2 \left(\frac{\beta \gamma}{\alpha} \partial_2 \Lambda \right)=0.
\end{equation}
To prove the statement of the theorem, we differentiate each of Eqs. (\ref{eqs:abc}) with respect to $x^2, x^1$ and $x^0$ respectively. This gives three third order equations for $\alpha, \beta$ and $\gamma$. One may now eliminate each of the third order terms using the non-diagonal Eqs. (\ref{eqs:NonDiagonal}). After a lengthy algebra, one sees that with the aid of Eqs. (\ref{eqs:NonDiagonal}) once more, all $38$ terms in each equation completely vanish. Therefore the $1+2$ Einstein equations for diagonal metrics are indeed compatible.
\end{proof}


The degrees of freedom $\alpha,\beta,\gamma$, and $\Lambda$ from theorem (\ref{thm:Compatibility}) are very useful. They provide an alternative way to study general diagonal spacetime metrics (\ref{eq:DiagonalMetric}). With such degrees of freedom the spacetime interval is
\begin{equation} \label{eq:SpacetimeInterval1P2}
ds^2 = e^{-2\Lambda} \left[-(\gamma dx^0)^2 + (\beta dx^1)^2 + (\alpha dx^2)^2\right] + e^{2\Lambda} (dx^3)^3
\end{equation}

The spacetime interval (\ref{eq:SpacetimeInterval1P2}) naturally generalizes the interval studied in \cite{bib:BelinskiZakharov1,bib:BelinskiZakharov2}. To see this, assume the metric is independent of $x^2$, set $\beta=\gamma$ and define $f=\gamma^2 e^{-2 \Lambda}$. This turns the spacetime interval (\ref{eq:SpacetimeInterval1P2}) into
\begin{equation} \label{eq:MetricZB}
ds^2 = f \left[-(dx^0)^2 + (dx^1)^2\right] + \alpha^2 e^{-2\Lambda} (dx^2)^2 + e^{2\Lambda} (dx^3)^3
\end{equation}
For this metric one may use the inverse scattering transform \cite{bib:BelinskiZakharov1,bib:BelinskiZakharov2} to derive gravitational solitons on diagonal metrics \cite{bib:BelinskiVerdaguer}. 

There is another merit of using the new degrees of freedom, as in the course of the proof of theorem (\ref{thm:Compatibility}) we just derived \emph{a new conservation law}. This is of course Eq. (\ref{eq:Lambda}). Whenever the metric is asymptotically flat, it also yields the integral of motion,
\begin{equation}
P=\int \left[\frac{\alpha\beta}{\gamma} \partial_0 \Lambda \right] dx^1 dx^2
\end{equation}
which is the conjugate momentum of the function $\Lambda$, as can be easily seen from the Lagrangian in Eq. (\ref{eq:LagrangianDiagonal}).

\section{Plane gravitational waves} \label{sec:PlaneGravitationalWaves}

Waves come in many forms and shapes. The simplest of which are of course plane waves, whose wavefronts are parallel planes extended ad infinitum. In general relativity, plane gravitational waves are typically studied as a special case of the famous $pp$-waves \cite{bib:Griffiths}. The $pp$-class consists of any spacetime metric that can be casted into the form,
\begin{equation} \label{eq:ppWaves}
ds^2 = H(u,x,y) du^2 + 2du dv + dx^2 +dy^2.
\end{equation}
Recently, a coordinate-free definition of them was given \cite{bib:Steele}. For such a metric, Einstein's vacuum equation reduces to Laplace's equation,
\begin{equation}
\frac{\partial^2 H}{\partial x^2} + \frac{\partial^2 H}{\partial y^2} = 0
\end{equation}
and is therefore \emph{linear} in $H$. A $pp$-wave is called a \emph{plane wave} if $H$ can be transformed into
\begin{equation}
H(u,x,y)=a(u) (x^2-y^2) +2b(u)xy
\end{equation}
where $a(u)$ and $b(u)$ control the waveform of the two possible polarizations.

Diagonal metrics of $1+2$ coordinates with the spacetime interval (\ref{eq:SpacetimeInterval1P2}) may describe $pp$ spacetimes as well as many non-$pp$ spacetimes (such as the Schwarzschild black hole). Either way, it is important to keep in mind that they correspond to a different class of solutions of Einstein's vacuum equations.

The equations considered in the previous section allow investigating diagonal metrics (\ref{eq:SpacetimeInterval1P2}) with 
\begin{eqnarray} \label{eqs:NaivePlaneWaves}
\Lambda=\Lambda(\eta) \hspace{0.7 cm} \alpha = \alpha (\eta) \\ \notag
\beta = \beta (\eta) \hspace{0.7 cm} \gamma=\gamma(\eta)
\end{eqnarray}
where each metric coefficient depends on all three coordinates through
\begin{equation} \label{eq:Phase}
\eta=\frac{1}{2}(-x^0 + p x^1 + q x^2).
\end{equation}
The form of the parameter $\eta$ corresponds to the naive definition of a plane wave propagating with velocity $v=(p^2 + q^2)^{-1}$, similarly to plane waves studied in other fields of physics. The factor $1/2$ is there because then setting $p=1$ and $q=0$ reduces $\eta$ to its former definition from the theory of gravitational solitons \cite{bib:BelinskiZakharov1}. But is this `naive' plane wave consistent with the canonical definition of a plane wave as a subset of the $pp$-class?

To answer this question, consider Einstein's vacuum equations. For the metric coefficients (\ref{eqs:NaivePlaneWaves}), the diagonal Eq. (\ref{eq:Lambda}) gives
\begin{equation} \label{eq:LambdaPlane}
\left(\frac{\alpha \beta}{\gamma} \Lambda'\right)' = p^2 \left(\frac{\alpha \gamma}{\beta} \Lambda'\right)' + q^2 \left(\frac{\beta \gamma}{\alpha} \Lambda'\right)'
\end{equation}
where the prime denotes differentiation with respect to $\eta$. Integrating it and noticing that $\Lambda'$ must vanish at some moment of time shows that the integration constant is trivial. Hence the following algebraic relation holds,
\begin{equation} \label{eq:PlaneWaveAlgebraic}
\alpha^2 \beta^2 = p^2 \alpha^2 \gamma^2 + q^2 \beta^2 \gamma^2.
\end{equation}
To ensure asymptotic flatness, one \emph{may} impose the conditions $\alpha,\beta,\gamma \rightarrow 1$ and $\Lambda\rightarrow \text{constant}$ at spatial infinity. In particular, this guarantees that the metric converges to the Minkowski metric. Taking this limit in Eq. (\ref{eq:PlaneWaveAlgebraic}) gives the relation
\begin{equation} \label{eq:PlaneWaveSpeedOfLight}
1=p^2+q^2,
\end{equation}
which has a clear physical interpretation: \emph{a sufficient condition for the gravitational wave to be asymptotically flat is for it to travel precisely at the speed of light}, just like the plane $pp$-waves \cite{bib:Griffiths}. 

Whenever $p,q\neq 0$ the off-diagonal Eqs. (\ref{eqs:NonDiagonal}) give three nonlinear ordinary differential equations,
\begin{eqnarray} \label{eqs:NonDiagonalPlane}
\frac{\alpha''}{\alpha} &=& -2 (\Lambda')^2 + \frac{\alpha'}{\alpha} \left(\frac{\beta'}{\beta}+ \frac{\gamma'}{\gamma}\right) \\ \notag
\frac{\beta''}{\beta} &=& -2 (\Lambda')^2 + \frac{\beta'}{\beta} \left(\frac{\alpha'}{\alpha}+ \frac{\gamma'}{\gamma}\right) \\ \notag
\frac{\gamma''}{\gamma} &=& -2 (\Lambda')^2 + \frac{\gamma'}{\gamma} \left(\frac{\alpha'}{\alpha}+ \frac{\beta'}{\beta}\right).
\end{eqnarray}
Similarly, the diagonal Eqs. (\ref{eqs:abc}) are
\begin{eqnarray} \label{eqs:DiagonalPlane}
\\ \alpha^2 (\beta \beta'' - p^2 \gamma \gamma'') &=& (-\alpha^2 \beta^2 + p^2 \alpha^2 \gamma^2 - q^2 \beta^2 \gamma^2) (\Lambda')^2 \nonumber \\
&& + \left(\frac{\alpha^2 \beta}{\gamma} - p^2 \frac{\alpha^2 \gamma}{\beta} + q^2 \beta \gamma\right) \beta' \gamma' \nonumber \\
\beta^2 (\alpha \alpha'' - q^2 \gamma \gamma'') &=& (-\alpha^2 \beta^2 - p^2 \alpha^2 \gamma^2 + q^2 \beta^2 \gamma^2) (\Lambda')^2 \nonumber \\
&& + \left(\frac{\alpha \beta^2}{\gamma} + p^2 \alpha \gamma - q^2 \frac{\beta^2 \gamma}{\alpha}\right) \alpha' \gamma' \nonumber \\
\gamma^2 (p^2 \alpha \alpha'' + q^2 \beta \beta'') &=& (-\alpha^2 \beta^2 - p^2 \alpha^2 \gamma^2 - q^2 \beta^2 \gamma^2) (\Lambda')^2 \nonumber \\
&&+ \left(\alpha \beta + p^2 \frac{\alpha \gamma^2}{\beta} + q^2 \frac{\beta \gamma^2}{\alpha} \right) \alpha' \beta' \nonumber.
\end{eqnarray}

As in the last section, Einstein's equations in this case are an overdetermined system. They are seven equations for four unknowns $\alpha,\beta,\gamma$ and $\Lambda$. Unfortunately, theorem (\ref{thm:Compatibility}) is no longer valid, as the functions sought here are of a very special form, depending on the coordinates $x^0,x^1,x^2$ through the phase $\eta$ only. This means that compatibility has to be studied once more. 

We substitute then the off-diagonal Eqs. (\ref{eqs:NonDiagonalPlane}) in the diagonal Eqs. (\ref{eqs:DiagonalPlane}). A lengthy algebra that exploits the relation just derived in Eq. (\ref{eq:PlaneWaveAlgebraic}) reveals that the plane wave Eqs. (\ref{eqs:NonDiagonalPlane}) and (\ref{eqs:DiagonalPlane}) are compatible if and only if,
\begin{equation} \label{eq:PlaneWaveCompatibility}
(\alpha^2)' (\beta^2)' = p^2 (\alpha^2)' (\gamma^2)' + q^2 (\beta^2)' (\gamma^2)'.
\end{equation}
Indeed, this means that every gravitational wave that is diagonal and \emph{planar} must satisfy the compatibility condition (\ref{eq:PlaneWaveCompatibility}). In virtue of Eq. (\ref{eq:PlaneWaveSpeedOfLight}) we will also assume that it propagates at the speed of light. It will next be proven that one may always take $p=1$ and $q=0$. Thus, without loss of generality it is sufficient to consider a wave propagating along the positive $x^1$-axis only.

To prove this claim, subtract each pair of consecutive off-diagonal equations (\ref{eqs:NonDiagonalPlane}) to write them as
\begin{eqnarray}
(\ln \frac{\alpha}{\beta})'' &=& (\ln \frac{\alpha}{\beta})' (\ln \frac{\gamma}{\alpha\beta})' \\ \notag
(\ln \frac{\gamma}{\alpha})'' &=& (\ln \frac{\gamma}{\alpha})' (\ln \frac{\beta}{\alpha\gamma})' \\ \notag
(\ln \frac{\beta}{\gamma})'' &=& (\ln \frac{\beta}{\gamma})' (\ln \frac{\alpha}{\beta \gamma})'.
\end{eqnarray}
These equations can be integrated immediately to yield
\begin{eqnarray} \label{eqs:NonDiagonalPlaneIntegrated}
(\ln \frac{\alpha}{\beta})' &=& C_1 \frac{\gamma}{\alpha \beta} \\ \notag
(\ln \frac{\gamma}{\alpha})' &=& C_2 \frac{\beta}{\alpha \gamma} \\ \notag
(\ln \frac{\beta}{\gamma})' &=& C_3 \frac{\alpha}{\beta \gamma}
\end{eqnarray}
where $C_1, C_2$ and $C_3$ are arbitrary constants of integration. Adding Eqs. (\ref{eqs:NonDiagonalPlaneIntegrated}) and multiplying the result by $\alpha\beta\gamma$ gives a compatibility condition
\begin{equation} \label{eq:PlaneCompatibility}
C_1 \gamma ^2 + C_2 \beta^2 + C_3 \alpha^2 = 0.
\end{equation}
Therefore, there are several possibilities.

If $C_1=0$  then $\beta$ is proportional to $\alpha$. Rescaling the coordinate $x^1$ by the same proportionality factor shows that such metric is of the form (\ref{eq:SpacetimeInterval1P2}) with $\beta=\alpha$. The second case is when either $C_2=0$ or $C_3=0$. Assuming without loss of generality that it is the former $C_2=0$, show that $\gamma$ is proportional to $\alpha$ and therefore from Eq. (\ref{eq:PlaneWaveCompatibility}), $\beta=\pm \alpha$ giving again the form (\ref{eq:SpacetimeInterval1P2}) with $\beta=\alpha$. The last case is when $\alpha$, $\beta$ and $\gamma$ are all proportional to one another. 


We have therefore shown that without loss of generality, the metric can \emph{always} be written in the form (\ref{eq:SpacetimeInterval1P2}) with $\alpha=\beta$. In this work we study the case where $p=1$ and $q=0$, and the variable $\eta$ in Eq. (\ref{eq:Phase}) reduces back to its definition from \cite{bib:BelinskiZakharov1}. This means that all of the metric coefficients depend only on $\eta=\frac{1}{2}(x^1-x^0)$. However, then Eq. (\ref{eq:PlaneWaveCompatibility}) together with Eq. (\ref{eq:PlaneWaveSpeedOfLight}) prove that $\alpha=\gamma$ as well. Further rescaling of $\eta$ finally yields the metric
\begin{equation} \label{eq:MetricBPR}
ds^2 = e^{-2\Lambda} \left[-(dx^0)^2 + (dx^1)^2 + (\alpha dx^2)^2\right] + e^{2\Lambda} (dx^3)^3.
\end{equation}
This is equivalent to the spacetime interval (\ref{eq:SpacetimeInterval1P2}) with $\beta=\gamma=1$, $p=1$ and $q=0$. For such a metric, all of Einstein's vacuum Eqs. (\ref{eqs:NonDiagonal}), (\ref{eqs:abc}) and (\ref{eq:Lambda}) yield a single equation,
\begin{equation} \label{eq:BPR1}
\alpha'' + 2\alpha (\Lambda')^2 = 0.
\end{equation}
The solutions of Eq. (\ref{eq:BPR1}) are the famous \emph{Bondi-Pirani-Robinson (BPR) waves} \cite{bib:RobinsonBondiWaves}. It is a known fact that a BPR wave is in particular a $pp$-wave \cite{bib:Griffiths}.


\section{Stability} \label{sec:Stability}
As shown in the last section, by properly choosing the coordinate system used, one may always describe a plane gravitational wave with a diagonal metric as propagating along the positive $x^1$-axis. This means that all the coefficients of the metric defined in Eq. (\ref{eq:SpacetimeInterval1P2}) are functions of the light-cone coordinate $\eta=\frac{1}{2}(x^1-x^0)$ alone. Furthermore, it was shown in the last section that one may also assume that $\beta=\gamma=1$. These waves satisfy the Bondi-Pirani-Robinson Eq. (\ref{eq:BPR1}),
\begin{equation} \label{eq:BPR}
0=\alpha_0 '' + 2\alpha_0 (\Lambda_0 ')^2
\end{equation}
where $\alpha_0 = \alpha(\eta)$ and $\Lambda_0 = \Lambda(\eta)$, and the prime denotes differentiation with respect to the variable $\eta$.

\begin{figure}[h!]
\begin{center}
\includegraphics[scale=0.8]{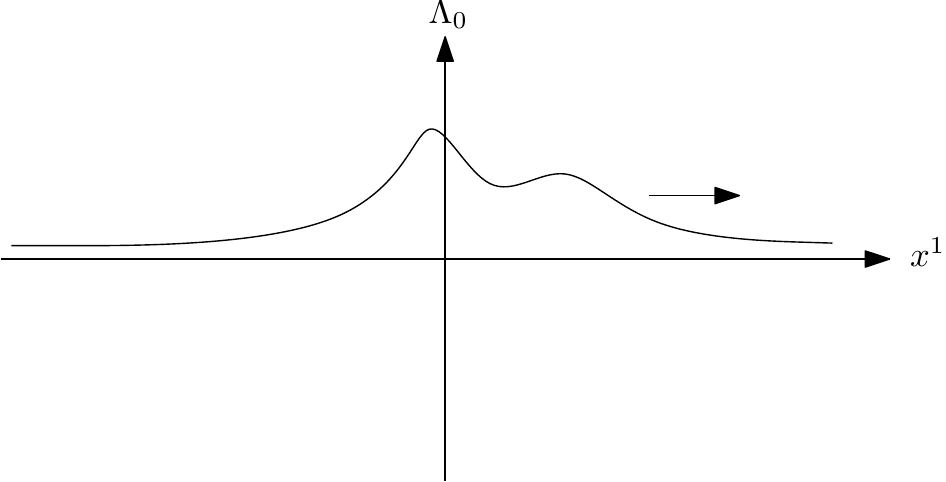}
\caption[The profile of the Bondi-Pirani-Robinson gravitational wave]{The profile of the Bondi-Pirani-Robinson gravitational wave can have arbitrary form, with the wave propagating in the positive $x^1$-direction only.
  	\label{fig:BPRWaveProfile}}
\end{center}
\end{figure}

Eq. (\ref{eq:BPR}) reveals a peculiar situation where the two degrees of freedom satisfy only one equation, which is thus \emph{an underdetermined system} for $\alpha_0$ and $\Lambda_0$. Therefore one of the functions $\alpha_0$ or $\Lambda_0$ can be set arbitrary. Physically, this means that such a gravitational wave may have any wave profile as determined by $\Lambda_0$ (see figure \ref{fig:BPRWaveProfile}).

As a practical example, consider the solution
\begin{equation} \label{eq:BargmannSolution}
\alpha_0=\tanh \eta.
\end{equation}
Solving Eq. (\ref{eq:BPR}) for $\Lambda_0'$ implies that $\Lambda_0'=\frac{\pm 1}{\cosh \eta}$ is of a soliton-like form. Despite being only an example, this solution is of fundamental importance in scattering theory. When $\Lambda_0'=\frac{\pm 1}{\cosh \eta}$ we see that $(\Lambda_0 ')^2 = \frac{d^2}{d\eta^2} \ln (1+e^{-2\eta})= \frac{1}{\cosh^2 \eta}$. Therefore, $(\Lambda_0 ')^2$ is a \emph{Bargmann potential} \cite{bib:TheoryOfSolitons}. It is worthwhile reviewing the general definition of Bargmann potentials as their unique characteristics appear in this stability problem.

In general, a Bargmann potential is a function of the form
\begin{equation} \label{eq:BargmannPotential}
(\Lambda_0 ')^2 = \frac{d^2}{d\eta^2} \ln \Delta
\end{equation}
where $\Delta$ is the determinant
\begin{equation}
\Delta = \det \left[\delta _{ij} + \frac{M_i ^2 e^{-(\lambda_i + \lambda_j) \eta}}{\lambda_i + \lambda_j}\right],
\end{equation}
$i,j = 1,2,\dots,n$, the constants $M_i$ are real, and $\lambda_i>0$. In the case considered above, we have $N=1$, $\lambda_1 = 1$ and $M_1 ^2 = 2$. Bargmann potentials are also commonly called "the $N$-solitonic potentials" or "\emph{reflectionless} potentials".

The solution $\alpha_0$ in Eq. (\ref{eq:BargmannSolution}) vanishes at $\eta=0$, and therefore the metric (\ref{eq:MetricBPR}) is singular at this point. Nevertheless, this is not a physical singularity. A simple but lengthy calculation shows that all of the components of the Weyl tensor $C_{\mu\nu\lambda\sigma}$ \cite{bib:Wald} are proportional to either one of the two components $C_{0220}$ or $C_{0330}$ given by 
\begin{equation}
C_{0220} = (\alpha_0) ^2 e^{-4 \Lambda_0} C_{0330} = -(\alpha_0) ^2 e^{-2 \Lambda_0} [\Lambda_0 '' + 3 (\Lambda_0 ')^2].
\end{equation}
Therefore the BPR spacetime is never singular when $\alpha_0=0$. In fact, $\alpha_0=0$ corresponds to events at which the spacetime is flat.

As mentioned earlier, the metric (\ref{eq:MetricBPR}) will be asymptotically flat if $\Lambda_0 \rightarrow 0$ as $|\eta| \rightarrow  \infty$. Given a particular $\Lambda_0$ satisfying this property, the general solution $\alpha_0$ of Eq. (\ref{eq:BPR}) may be unbounded. If one further imposes the conditions $\alpha_0\rightarrow const$ and $\alpha_0' \rightarrow 0$ as $\eta\rightarrow -\infty$, then $\alpha_0 \sim c_1 + c_2 \eta$ at infinity $\eta\rightarrow\infty$, for two real constants $c_1$ and $c_2$. Most of the rest of this work will focus on such solutions $\alpha_0$ for which $c_2=0$. This is the class of functions $\alpha_0$ that are \emph{bounded at infinity}. In particular, this class of solutions includes the Bargmann potential solution given by Eq. (\ref{eq:BargmannSolution}).





Since the BPR waves are a solution the Einstein vacuum Eq. (\ref{eqs:EinsteinVacuum}), the stress-energy tensor $T_{\mu\nu}$ is identically zero and there is no other source of gravitational fields. One can therefore think of the field and the stress it produces as being in equilibrium under the gravitational effects of the Bondi-Pirani-Robinson gravitational waves themselves. We have an equilibrium, but \emph{is it stable}? The goal of this section is to explore this fascinating question.

The stability of plane gravitational waves is of great importance. In the past two decades there have been serious attempts to detect gravitational waves, so far unsuccessful. This includes the experiment conducted at the Laser Interferometer Gravitational-Wave Observatory (LIGO), a monumental project costing several hundred million dollars \cite{bib:LIGO}. If plane gravitational waves are unstable, there is a very good reason they are difficult to detect, as small departures from idealized waves might destroy them.

To study this question of stability, consider a small perturbation of the Bondi-Pirani-Robinson waves that allows them to propagate weakly in the perpendicular $x^2$-direction and to even reflect in the negative $x^1$-direction. It is convenient to use light-cone coordinates \cite{bib:BelinskiZakharov1},
\begin{equation} \label{eq:EtaZeta}
\eta=\frac{1}{2}(x^1-x^0) \hspace{1cm} \zeta=\frac{1}{2}(x^1+x^0).
\end{equation}
In the approximation of a small perturbation, the equations are linear in first order. It is then possible to separate their disturbance into proper modes and find their frequencies, whether real (stability) or imaginary (instability). Therefore we consider metric coefficients of the form
\begin{eqnarray}
\alpha = \alpha_0 + \delta \alpha && \beta = 1 + \delta \beta  \\ \notag
\gamma = 1 + \delta \gamma && \Lambda = \Lambda_0 + \delta \Lambda
\end{eqnarray}
where $\delta \alpha$, $\delta \beta$, $\delta \gamma$ and $\delta \Lambda$ are small corrections that depend on all three variables $\zeta$, $\eta$ and $x^2$. Here $\alpha_0$ and $\Lambda_0$ are the original BPR coefficients mentioned in Eq. (\ref{eq:BPR}), and are dependent on $\eta$ alone: $\alpha_0 = \alpha_0(\eta)$ and $\Lambda_0= \Lambda_0(\eta)$.

Linearizing Eqs. (\ref{eqs:abc}) with respect to the perturbation, and transforming to Fourier modes $\zeta \rightarrow \Omega$ and $x^2\rightarrow k$ gives three second order differential equations for the perturbations of $\alpha, \beta$ and $\gamma$:
\begin{eqnarray} \label{eqs:BPRStabilityOffDiagonal}
\\ (\delta \beta)'' - (\delta \gamma)''&=& 4i \Omega \Lambda_0 ' (\delta\Lambda) + \left[2i\Omega \partial_\eta + \Omega^2- 2(\Lambda_0 ')^2\right](\delta\beta) \nonumber \\
&& \hspace{-0.35cm} + \left[2i\Omega \partial_\eta - \Omega^2+2(\Lambda_0 ')^2\right](\delta\gamma) \nonumber \\
(\delta\alpha)''&=& - 4\alpha_0 \Lambda_0 ' (\delta\Lambda)'  + 2 \alpha_0 (\Lambda_0')^2 (\delta\beta) \nonumber \\ 
&& + \left[2 i \Omega \partial_\eta + \Omega^2 - 4 (\Lambda_0')^2  - \frac{\alpha_0 ''}{\alpha_0}\right] (\delta\alpha) \nonumber \\
&& + \left[2 \alpha_0 ' \partial_\eta - \frac{4 k^2}{\alpha_0} - 2 \alpha_0 (\Lambda_0 ')^2\right] (\delta\gamma) \nonumber \\
(\delta\alpha)''&=& - 4\alpha_0 \Lambda_0 ' (\delta\Lambda)' + 2 \alpha_0 (\Lambda_0')^2 (\delta\gamma)  \nonumber \\
&& + (-2 i \Omega \partial_\eta + \Omega^2- 4 (\Lambda_0')^2 - \frac{\alpha_0 ''}{\alpha_0}) (\delta\alpha) \nonumber \\
&& + \left[2 \alpha_0 ' \partial_\eta + \frac{4 k^2}{\alpha_0}-2\alpha_0 (\Lambda_0')^2\right](\delta\beta) \nonumber
\end{eqnarray}
where the primes denote differentiation with respect to the variable $\eta$.

As for the off-diagonal equations in (\ref{eqs:NonDiagonal}), the first equation is a second order ordinary differential equation for the function $\delta\alpha$,
\begin{eqnarray} \label{eq:BPRStabilityDiagonal1}
(\delta\alpha)''&=&- 4\alpha_0 \Lambda_0 ' (\delta\Lambda)' -\left[\Omega^2 + 2 (\Lambda_0 ')^2\right] (\delta\alpha) \nonumber \\
&& + \alpha_0 ' \left[(\delta\beta+\delta\gamma)'-i\Omega(\delta\beta-\delta\gamma)\right].
\end{eqnarray}
The second and third equations in (\ref{eqs:NonDiagonal}) yield two first order equations for the perturbations $\delta\beta$ and $\delta\gamma$ provided that the frequency $k$ is nonzero,
\begin{eqnarray} \label{eqs:BPRStabilityDiagonal23}
(\delta\beta)'&=& \left[\frac{\alpha_0 '}{\alpha_0} + i\Omega\right](\delta\beta) - 2\Lambda_0' (\delta\Lambda) \\ \notag
(\delta\gamma)'&=& \left[\frac{\alpha_0 '}{\alpha_0} - i\Omega\right](\delta\gamma) - 2\Lambda_0' (\delta\Lambda)
\end{eqnarray}
and are otherwise automatically satisfied. Last but not least, Eq. (\ref{eq:Lambda}) is a first order equation for the difference in perturbations of $\beta$ and $\gamma$,
\begin{eqnarray} \label{eq:BPRStabilityDiagonal4}
0&=& - \left[2 i \Omega \alpha_0\partial_\eta + i \Omega \alpha_0 '- \frac{2k^2}{\alpha_0}\right](\delta\Lambda) - i \Omega \Lambda_0' (\delta\alpha) \nonumber \\
&& + \left[\alpha_0 \Lambda_0 ' (\delta\beta-\delta\gamma) \right]'
\end{eqnarray}

Thus, we obtained an overdetermined system of ODEs that includes seven equations for the four unknowns $\delta \alpha$, $\delta\beta$, $\delta\gamma$ and $\delta\Lambda$. Quite remarkably, a very similar situation occurs in what physically seems to be a completely different stability problem in general relativity - the stability of the Schwarzschild singularity. In a famous work of Regge and Wheeler \cite{bib:ReggeWheeler}, they showed that the complete set of Einstein's equation gives only seven equations for four unknowns, just like we have here. To analyze the equations herein obtained, it is convenient to distinguish two cases that are physically very different.

\subsection*{The case $k\neq0$}
describes a perturbation wave that travels in the $x^1$ as well as the $x^2$ direction. Moreover, if $\Omega\neq 0$ such a wave reflects in the negative $x^1$-direction due to its collision with the BPR wave.

Here the stability of the BPR wave and that of the Schwarzschild singularity reveals a great similarity. For the latter, Regge and Wheeler proved that the seven stability equations are equivalent to three differential equations coupled to one algebraic relation, from which the stability of the Schwarzschild black hole followed \cite{bib:ReggeWheeler}. Considering the fact that the two problems are physically quite different, it is quite surprising that also in this problem one may reduce the overdetermined set of seven equations to a much simpler set of three ordinary differential equations coupled to one algebraic relation.

\begin{thm} \label{thm:BPRStabilityEqs}
If the frequency $k\neq0$ then the seven stability equations (\ref{eqs:BPRStabilityOffDiagonal}), (\ref{eq:BPRStabilityDiagonal1}), (\ref{eqs:BPRStabilityDiagonal23}) and (\ref{eq:BPRStabilityDiagonal4}) are equivalent to four equations, three of which are first order equations for $\delta\beta$, $\delta\gamma$ and $\delta\Lambda$,
\begin{eqnarray} \label{eqs:BPRStability1stOrder}
(\delta\beta)' &=& \left[\frac{\alpha_0 '}{\alpha_0} + i \Omega\right] (\delta\beta) - 2\Lambda_0 ' (\delta\Lambda) \\ \notag
(\delta\gamma)' &=& \left[\frac{\alpha_0 '}{\alpha_0} - i \Omega\right] (\delta\gamma) - 2\Lambda_0 ' (\delta\Lambda) \\ \notag
(\delta\Lambda)' &=& -\left(\frac{i k^2}{\Omega \alpha_0 ^2} + \frac{\alpha_0 '}{2\alpha_0}\right)(\delta\Lambda) \\ \notag
&& + \left[-\frac{i}{2\Omega} \Lambda_0 '' + \frac{k^2}{2\Omega^2} \frac{\Lambda_0 '}{\alpha_0 ^2} - \frac{3i}{4\Omega}  \Lambda_0 ' \frac{\alpha_0 '}{\alpha_0}\right](\delta\beta-\delta\gamma) \\ \notag
&& + \frac{1}{2} \Lambda_0' (\delta\beta+\delta\gamma),
\end{eqnarray}
plus an algebraic relation for $\delta\alpha$,
\begin{equation} \label{eq:BPRStabilityAlgebraic}
0=(\delta\alpha)+\left(\frac{i}{2 \Omega} \alpha_0 ' + \frac{k^2}{\Omega^2 \alpha_0}\right)(\delta\beta-\delta\gamma)
\end{equation}
\end{thm}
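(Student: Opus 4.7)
The plan is to exploit the redundancy of this overdetermined system by eliminating second-order derivatives. Two of the three first-order equations in (\ref{eqs:BPRStability1stOrder}) --- those for $\delta\beta$ and $\delta\gamma$ --- coincide verbatim with (\ref{eqs:BPRStabilityDiagonal23}), which under the assumption $k\neq 0$ are genuine differential equations rather than identities. Nothing need be done for them. What remains is to derive the algebraic relation (\ref{eq:BPRStabilityAlgebraic}) for $\delta\alpha$, to derive the first-order equation for $\delta\Lambda$, and then to verify that the remaining second-order equations are automatically satisfied.

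For the algebraic relation, I would observe that $(\delta\alpha)''$ appears in exactly three equations: the last two of (\ref{eqs:BPRStabilityOffDiagonal}) and the equation (\ref{eq:BPRStabilityDiagonal1}). Summing the two equations in (\ref{eqs:BPRStabilityOffDiagonal}) causes the odd-in-$\partial_\eta$ contributions $\pm 2i\Omega \partial_\eta \delta\alpha$ to cancel, as do the opposite-sign terms $\pm 4k^2/\alpha_0$ acting on $\delta\beta$ and $\delta\gamma$. After using $\alpha_0''/\alpha_0 = -2(\Lambda_0')^2$ from the BPR equation (\ref{eq:BPR}) to simplify the coefficient of $\delta\alpha$, subtracting twice (\ref{eq:BPRStabilityDiagonal1}) simultaneously eliminates every $(\delta\alpha)''$, $(\delta\Lambda)'$, and $(\delta\beta)'+(\delta\gamma)'$ term, leaving a purely algebraic relation of the form $4\Omega^2 \delta\alpha + (4k^2/\alpha_0 + 2i\Omega\alpha_0')(\delta\beta-\delta\gamma) = 0$, which upon dividing by $4\Omega^2$ is exactly (\ref{eq:BPRStabilityAlgebraic}).

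For the $\delta\Lambda$ equation, I would expand the bracketed derivative in (\ref{eq:BPRStabilityDiagonal4}) via the Leibniz rule, substitute $(\delta\beta)' - (\delta\gamma)' = (\alpha_0'/\alpha_0)(\delta\beta - \delta\gamma) + i\Omega(\delta\beta + \delta\gamma)$ from the first two first-order equations, and eliminate the surviving $\delta\alpha$ via (\ref{eq:BPRStabilityAlgebraic}). Dividing the resulting relation by $2i\Omega\alpha_0$ yields precisely the third equation of (\ref{eqs:BPRStability1stOrder}); in particular the extra term $\frac{k^2\Lambda_0'}{2\Omega^2\alpha_0^2}$ and the modified coefficient $-\frac{3i\Lambda_0'\alpha_0'}{4\Omega\alpha_0}$ both originate from the substitution for $\delta\alpha$.

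The hardest step, and the main obstacle, is the consistency verification: one must show that the four reduced equations imply all seven original equations, i.e., that no information has been lost in the reduction. Since (\ref{eq:BPRStabilityAlgebraic}) now expresses $\delta\alpha$ algebraically in terms of $\delta\beta, \delta\gamma, \delta\Lambda$, the derivatives $(\delta\alpha)'$ and $(\delta\alpha)''$ can be computed by differentiating it and using (\ref{eqs:BPRStability1stOrder}) to remove all derivatives of $\delta\beta, \delta\gamma, \delta\Lambda$, together with (\ref{eq:BPR}) to remove $\alpha_0''$. Substituting these closed expressions for $(\delta\alpha)'$ and $(\delta\alpha)''$ into each remaining second-order equation should reduce it to an algebraic identity in $\delta\beta, \delta\gamma, \delta\Lambda$ whose coefficients must vanish. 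Each such identity involves many terms, but they can be organized by weight in $\Omega^{-1}$ and $k^2$ so that the cancellations can be checked layer by layer.
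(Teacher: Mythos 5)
Your proposal is correct and follows essentially the same route as the paper's own proof: take (\ref{eqs:BPRStabilityDiagonal23}) as given, combine the two $(\delta\alpha)''$ off-diagonal equations with (\ref{eq:BPRStabilityDiagonal1}) (using $\alpha_0''=-2\alpha_0(\Lambda_0')^2$) to obtain the algebraic relation (\ref{eq:BPRStabilityAlgebraic}), substitute it into (\ref{eq:BPRStabilityDiagonal4}) to get the $(\delta\Lambda)'$ equation, and defer the converse to a direct (if lengthy) back-substitution, exactly as the authors do. One minor wording slip: upon summing the two off-diagonal equations the $\pm 4k^2/\alpha_0$ terms do \emph{not} cancel but combine into $\frac{4k^2}{\alpha_0}(\delta\beta-\delta\gamma)$, which is precisely the origin of the $\frac{k^2}{\Omega^2\alpha_0}(\delta\beta-\delta\gamma)$ term in the (correctly stated) final relation you display.
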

\begin{proof}
Eqs. (\ref{eqs:BPRStabilityDiagonal23}) allow completely eliminating any derivatives of $\delta\beta$ and $\delta\gamma$ in each of the stability equation. In fact, differentiating them and plugging their derivatives into the first off-diagonal equation in (\ref{eqs:BPRStabilityOffDiagonal}) gives a trivial result. Therefore the first off-diagonal equation in (\ref{eqs:BPRStabilityOffDiagonal}) is the consequence of two of the diagonal equations.

Similarly, the diagonal Eq. (\ref{eq:BPRStabilityDiagonal1}) can be used to eliminate the second derivative of $\delta\alpha$ in the two remaining off-diagonal Eqs. (\ref{eqs:BPRStabilityOffDiagonal}). The result can be simplified even more by excluding the derivatives of $\delta\beta$ and $\delta\gamma$ again. This gives two first order equations for $\delta a$,
\begin{eqnarray}
0&=& 2\Omega (\Omega+ i \partial_\eta) (\delta\alpha) \\ \notag
&& + \left[2\alpha_0 (\Lambda_0 ')^2 - \frac{(\alpha_0 ')^2}{\alpha_0}\right] (\delta\beta) \\ \notag
&& -\left[2i\Omega \alpha_0 ' + \frac{4k^2}{\alpha_0}+2\alpha_0 (\Lambda_0 ')^2 - \frac{(\alpha_0 ')^2}{\alpha_0}\right] (\delta\gamma) \\ \notag
0&=& 2\Omega (\Omega-i \partial_\eta) (\delta\alpha) \\ \notag
&& + \left[2i\Omega \alpha_0 ' + \frac{4k^2}{\alpha_0}-2\alpha_0 (\Lambda_0 ')^2 + \frac{(\alpha_0 ')^2}{\alpha_0}\right] (\delta\beta) \\ \notag
&& + \left[2\alpha_0 (\Lambda_0 ')^2 - \frac{(\alpha_0 ')^2}{\alpha_0}\right] (\delta\gamma).
\end{eqnarray}
Adding these equations proves that the algebraic Eq. (\ref{eq:BPRStabilityAlgebraic}) holds. It can be used in the last diagonal equation (\ref{eq:BPRStabilityDiagonal4}) to eliminate $\delta\alpha$ altogether, yielding an equation for $\delta\Lambda$. This is the last equation in (\ref{eqs:BPRStability1stOrder}) (the first two equations in (\ref{eqs:BPRStability1stOrder}) were already derived in Eqs. (\ref{eqs:BPRStabilityDiagonal23})). This proves the that seven equations imply the four equations (\ref{eqs:BPRStability1stOrder}) and (\ref{eq:BPRStabilityAlgebraic}). 

Conversely, it is a straight-forward but an elaborated task to use Eqs. (\ref{eqs:BPRStability1stOrder}) and (\ref{eq:BPRStabilityAlgebraic}) in the seven Eqs. (\ref{eqs:BPRStabilityOffDiagonal}), (\ref{eq:BPRStabilityDiagonal1}), (\ref{eqs:BPRStabilityDiagonal23}) and (\ref{eq:BPRStabilityDiagonal4}) and see that they are satisfied.
\end{proof}

The fact that Eqs. (\ref{eqs:BPRStability1stOrder}) are decoupled from $\delta\alpha$, shows that one only needs to focus on these equations. Once solved, they can be immedietly used in Eq. (\ref{eq:BPRStabilityAlgebraic}) to give the function $\delta\alpha$.

Let $\delta\beta = \frac{1}{2}\alpha_0 (B^{+}+B^{-})$, $\delta\gamma=\frac{1}{2} \alpha_0 (B^{+}-B^{-})$ and $\delta\Lambda=\alpha_0 L$. From Eqs. (\ref{eqs:BPRStability1stOrder}), the functions $B^+$, $B^-$ and $L$ satisfy
\begin{eqnarray} \label{eqs:BPRStability1st2}
(B^{+})' &=& i \Omega B^{-} - 4\Lambda_0 ' L \\ \notag
(B^{-})' &=& i \Omega B^{+} \\ \notag
L' &=& -\left(\frac{i k^2}{\Omega \alpha_0 ^2} + \frac{3 \alpha_0 '}{2\alpha_0}\right) L\\ \notag
&& + \left[-\frac{i}{2\Omega} \Lambda_0 '' + \frac{k^2}{2\Omega^2} \frac{\Lambda_0 '}{\alpha_0 ^2} - \frac{3i}{4\Omega}  \Lambda_0 ' \frac{\alpha_0 '}{\alpha_0}\right] B^{-} + \frac{1}{2} \Lambda_0' B^{+}
\end{eqnarray}
One may eliminate $B^{+}$ from the third equation using the second equation, to get a first order equation relating $L$ and $B^{-}$. It is a miracle that it can be written in a very simple form
\begin{equation}
\Psi' + \left(\frac{ik^2}{\Omega \alpha_0 ^2} + \frac{3}{2} \frac{\alpha_0 '}{\alpha_0}\right) \Psi = 0
\end{equation}
where $\Psi=2 i \Omega L - \Lambda_0 ' B^{-}$ is a complex-valued `wave-function'.
It may be integrated immediately to obtain an algebraic relation between $L$ and $B^{-}$,
\begin{equation} \label{eq:PsiReflection}
\Psi \equiv 2 i \Omega L - \Lambda_0 ' B^{-} = \frac{K}{|\alpha_0| ^{3/2}} \exp \left[- \frac{ik^2}{\Omega} \int \frac{d\eta}{\alpha_0 ^2} \right]
\end{equation}
where $K=K(\Omega,k)$ is a constant of integration. This allows to obtain a single second order equation for $B^{-}$ alone.

Differentiate the second equation in (\ref{eqs:BPRStability1st2}). The derivative $(B^{+})'$ may be eliminated through the first equation in (\ref{eqs:BPRStability1st2}) while the function $L$ can also be excluded using the algebraic relation just derived. This yields a single second order equation for the function $B^{-}$ only,
\begin{equation} \label{eq:BMinus}
(B^{-})''+(\Omega^2 + 2 \Lambda_0 '^2) B^{-} + 2K \frac{\Lambda_0 '}{|\alpha_0| ^{3/2}} \exp\left[- \frac{ik^2}{\Omega} \int \frac{d\eta}{\alpha_0 ^2}\right]=0.
\end{equation}
Once solved, $L$ and $B^+$ can be easily obtained from Eq. (\ref{eq:PsiReflection}) and the second equation in (\ref{eqs:BPRStability1st2}).

The reader may be concerned of the division by $|\alpha_0|^{3/2}$ in Eqs. (\ref{eq:PsiReflection}) and (\ref{eq:BMinus}).The functions $\Psi$ and $B^{-}$ seem to be singular when $\alpha_0$ vanishes. However, the authors computed the curvature (Petrov) invariants \cite{bib:LL}, from which it is evident that the points where $\alpha_0$ vanishes are not physical singularities.


\subsection*{The case $k=0$}
corresponds to a metric which is independent of $x^2$. Physically, it represents a perturbation wave propagating along the negative $x^1$ axis towards a head on collision with the BPR wave.

In this case one may assume without loss of generality that $\beta = \gamma$, and particularly $\delta\beta = \delta\gamma$ \cite{bib:BelinskiVerdaguer}. This case $k=0$ can be naturally studied using a limiting procedure from the case $k\neq 0$. Indeed, taking the limit $k \rightarrow 0$ in Eq. (\ref{eq:PsiReflection}) yields 
\begin{equation}
\Psi = \frac{K(\Omega,0)}{|\alpha_0| ^{3/2}}.
\end{equation}
One can see that the $\psi$ is completely independent of $\eta$. This leads to an astonishing fact. A gravitational wave of small amplitude traveling along the negative $x^1$-axis will go straight through the BPR wave which is traveling in the opposite direction. This result is independent of the amplitude of the BPR wave. In other words, in this case the perturbation is trivial and \emph{a gravitational BPR wave of arbitrary strength is completely transparent to the transverse perturbation wave}.

%

\section{Transparency of Strong Gravitational Waves} \label{sec:Transparency}
As was mentioned earlier, asymptotically $\alpha_0$ is always a linear function of $\eta$. In this section, we will study the case where $\alpha_0$ approaches a constant asymptotically. Without loss of generality, we will assume that $|\alpha_0(\eta)| \rightarrow 1$ as $|\eta|\rightarrow \infty$.

By redefining the constant of integration $K(\Omega,k)$ in Eq. (\ref{eq:PsiReflection}), we may rewrite the 'wave-function' as
\begin{equation} \label{eq:PsiWellBehaved}
\Psi = \frac{K(\Omega,k)}{|\alpha_0| ^{3/2}} \exp \left[- \frac{ik^2}{\Omega}\eta + \frac{ik^2}{\Omega} \int_\eta ^\infty \left(\frac{1}{\alpha_0 ^2} - 1\right) d\eta \right].
\end{equation} 
This means after the collision the asymptotic behavior of $\Psi$ is
\begin{equation} \label{eq:PsiInfty1}
\Psi = K(\Omega,k) \exp \left[- \frac{ik^2}{\Omega}\eta\right] \text{ as } \eta \rightarrow \infty
\end{equation}
and before the collision its asymptotic behavior is
\begin{equation} \label{eq:PsiMInfty1}
\Psi = K(\Omega,k) \exp \left[- \frac{ik^2}{\Omega}(\eta-\Delta)\right] \text{ as } \eta \rightarrow -\infty
\end{equation}
where 
\begin{equation} \label{eq:Delta}
\Delta = \int _{-\infty} ^\infty \left(\frac{1}{\alpha_0 ^2} - 1\right) d\eta
\end{equation}
If we set
\begin{equation}
K(\Omega,k) = K(\Omega) \delta(k+s \Omega)
\end{equation}
for $0<s<1$, then Eq. (\ref{eq:PsiInfty1}) gives the form of the wave after the collision,
\begin{equation} \label{eq:PsiInfty2}
\Psi \sim K(\Omega) \delta(k+s \Omega) \exp \left[- i s^2 \Omega \eta\right] \text{ as } \eta \rightarrow \infty.
\end{equation}
Taking the inverse Fourier transform gives the explicit form of the transmitted wave
\begin{equation} \label{eq:PsiInfty3}
\Psi \sim \frac{1}{2\pi} \int K(\Omega) e^{i\Omega\left[\zeta - s^2 \eta - s x^2\right]} d\Omega \text{ as } \eta \rightarrow \infty.
\end{equation}
This means that after the collision, $\psi$ is a wave of the form
\begin{equation} \label{eq:PsiInfty4}
\Psi = \Psi(\phi)  \text{ as } \eta \rightarrow \infty,
\end{equation}
where
\begin{equation} \label{eq:Phi}
\phi=\zeta - s^2 \eta - s x^2.
\end{equation}
Let us assume that the wave $\Psi$ is maximal (i.e. that its modulus $|\Psi|$ is maximal) at $\phi=0$. Restoring back the original coordinates $x^0$ and $x^1$ via Eq. (\ref{eq:EtaZeta}) and setting $x^0=0$ shows that the position of the transmitted wave satisfies
\begin{equation}
x^1 = \frac{2s}{1-s^2} x^2
\end{equation}
giving a transmitted wave with a scattering angle $\theta$, satisfying
\begin{equation}
\tan \theta = \frac{2s}{1-s^2}.
\end{equation}
In order to understand the behavior of the wave prior to the collision, we proceed in a similar fashion and take the inverse Fourier transform of Eq. (\ref{eq:PsiInfty2}). An analogous computation yields
\begin{equation} \label{eq:PsiMInfty4}
\Psi = \Psi(\phi + s^2 \Delta)  \text{ as } \eta \rightarrow -\infty,
\end{equation}
where $\Delta$ was defined in Eq. (\ref{eq:Delta}). Therefore the incident wave has the same shape and direction as the transmitted wave. In other words, a small transmitted wave would go through a strong BPR wave \emph{without any reflection} (see figure \ref{fig:BPRTransparency}). Its direction is preserved after the collision, and \emph{the only remnant of the collision is a phase shift} $s^2 \Delta$ as given by Eq. (\ref{eq:PsiMInfty4}). This is one of the defining properties of solitons \cite{bib:TheoryOfSolitons}.

Let us consider a practical example for the stability of the Bondi-Pirani-Robinson wave. Let $\alpha_0=\tanh \eta$ and $\Lambda_0'=\frac{\pm 1}{\cosh \eta}$ be the BPR wave that was mentioned in Eq. (\ref{eq:BargmannSolution}). For this wave the integral in Eq. (\ref{eq:PsiReflection}) can be evaluated explicitly,
\begin{equation}
\Psi(\eta,\Omega,k) = \frac{K}{|\tanh \eta| ^{3/2}} \exp \left[- \frac{ik^2}{\Omega} (\eta - \coth\eta) \right]
\end{equation}
Consider now a collision between the incident wave $\Psi$ (for fixed frequencies $k$ and $\Omega$) and a wave barrier. Normally, such a collision generates a reflected wave. However, here the asymptotic behavior of $\Psi$ is
\begin{equation} \label{eq:PsiAsymptotics}
\Psi \sim K \exp \left[- \frac{ik^2}{\Omega} (\eta \mp 1) \right] \text{ as } \eta \rightarrow \pm \infty
\end{equation}
showing that the wave $\Psi$ maintains the same asymptotic behavior except for a phase shift. Therefore the wave $\Psi$ is transparent to a BPR wave of arbitrary amplitude (see figure \ref{fig:BPRTransparency}). This is a \emph{striking fact and a strong hint of integrability}.

\begin{figure}
  \begin{center}
    \includegraphics[width=0.45\textwidth]{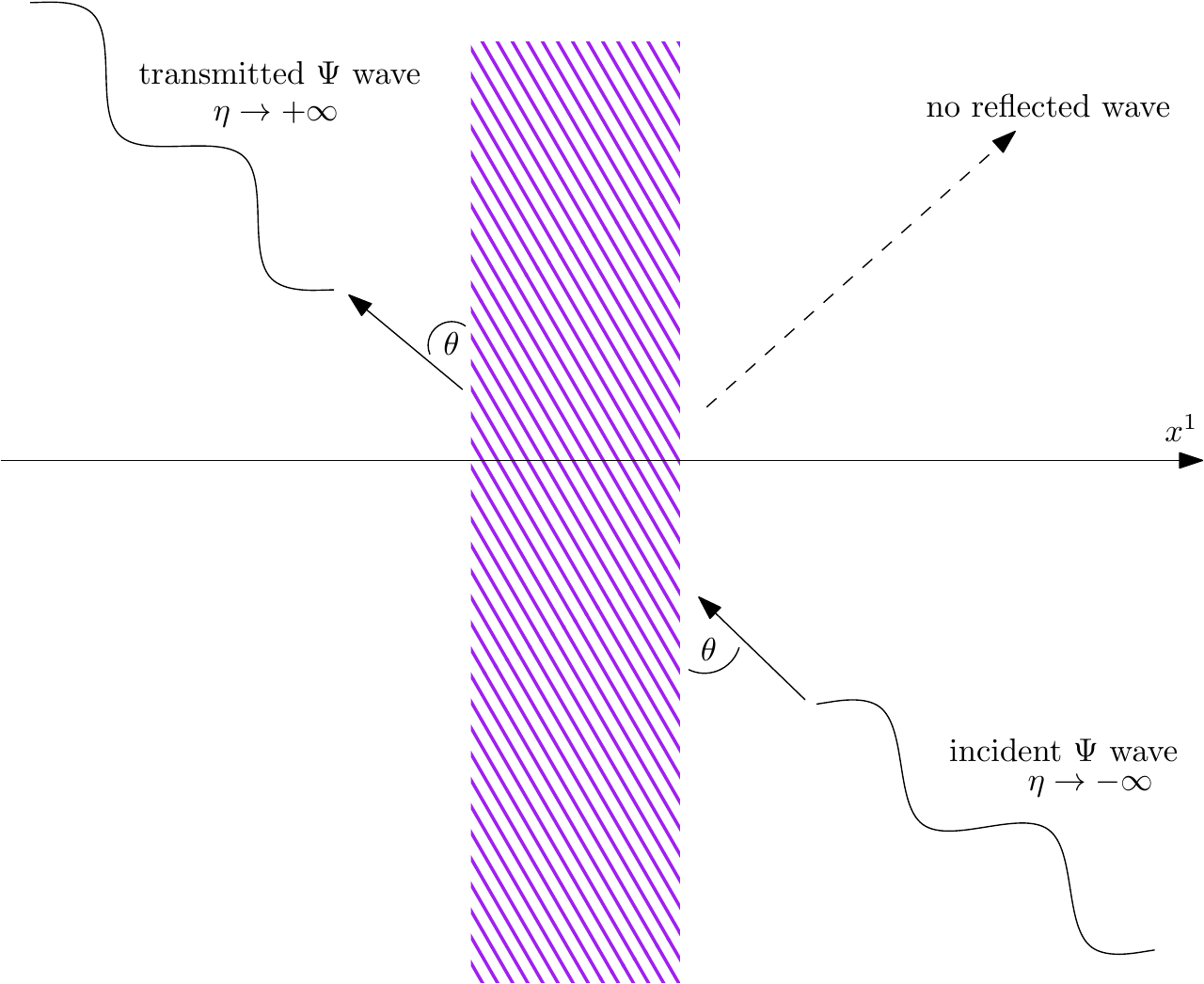}
  \end{center}
  \caption[Reflectionless Bondi-Pirani-Robinson wave]{The diagram shows the `wave-function' $\Psi$ for corresponding to $(\Lambda_0 ')^2$ being a Bargmann potential (\ref{eq:BargmannPotential}). In this case, Eq. (\ref{eq:PsiAsymptotics}) shows that $\Psi$ represents a wave traveling to the right both before and after hitting the wave barrier. Therefore there is no reflective wave and the barrier is transparent for $\Psi$. The only remnant of the interaction is a phase shift. \label{fig:BPRTransparency}}
\end{figure}

\section{Conclusions} \label{sec:Conclusions}

In studying the stability of the Bondi-Pirani-Robinson wave and that of the Schwarzschild black hole, we noticed a remarkable analogy between the problems. It is therefore very reasonable to expect that stability follows in more general circumstances. The fact that both the Bondi-Pirani-Robinson wave and the Schwarzschild black hole are a special case of the block diagonal metric integrable by the inverse scattering transform \cite{bib:BelinskiZakharov1}, makes it tempting to conjecture that perhaps all the solutions of the Einstein vacuum equation belonging to this class of metrics are stable. 

In both cases $k=0$ and $k\neq 0$ imaginary modes yield a spacetime that is not asymptotically flat and can therefore be disregarded on physical grounds. Consequently, we conclude that there are no unstable solutions for the perturbation, and that \emph{the Bondi-Pirani-Robinson (BPR) wave is stable}.

Beyond the mere stability, the case where $\alpha_0 \xrightarrow {\eta \rightarrow \pm 1} \pm1$ (where $(\Lambda_0 ')^2$ is a Bargmann potential (\ref{eq:BargmannPotential})) was studied further. In this case, an emitted wave traveling for a head-on collision towards a BPR wave of arbitrary amplitude revealed a fascinating physical phenomena. The emitted wave travelled through the BPR wave \emph{with no reflection}. Moreover, \emph{the transmitted wave left the collision process intact}, with its original shape and direction. The only hint of the collision process was a phase shift (see figure \ref{fig:BPRTransparency}). This phenomenon is common to integrable systems that describe solitons \cite{bib:TheoryOfSolitons}, and makes one suspect that the $1+2$ vacuum Einstein equations for diagonal metrics (\ref{eqs:DiagonalEinstein3DOffDiagonal},\ref{eqs:DiagonalEinstein3DDiagonal}) are integrable. 

\section{Acknowledgements}
Y. Hadad would like to thank Prof. Kundt for a fascinating correspondence that helped in forming the results of this work.


\begin{thebibliography}{42}

\bibitem{bib:Wald}
Robert Wald
\newblock {\em General Relativity}
\newblock The University of Chicago Press (1984)

\bibitem{bib:Schwarzschild}
Karl Schwarzschild,
\newblock {\em \"Uber das Gravitationsfeld eines Massenpunktes nach der Einstein'schen Theorie}
\newblock Sitzungsberichte der Kšniglich Preussischen Akademie der Wissenschaften 1 (1916): 189Ð196.

\bibitem{bib:Kasner}
Edward Kasner,
\newblock {\em Solutions of the Einstein equations involving functions of only one variable},
\newblock {Transactions of the American Mathematical Society 27, no. 2 (1925): 155-162}

\bibitem{bib:ExactSolutions}
Hans Stephani, Dietrich Kramer, Malcolm MacCallum, Cornelius Hoenselaers and Eduard Herlt
\newblock {\em Exact Solutions of Einstein's Field Equations}
\newblock Cambridge University Press (2003)

\bibitem{bib:BelinskiVerdaguer}
Vladimir Belinski and Enric Verdaguer
\newblock {\em Gravitational Solitons}
\newblock {Cambridge Monographs on Mathematical Physics (2001)}

\bibitem{bib:RobinsonBondiWaves}
Hermann Bondi, FŽlix AE Pirani and Ivor Robinson,
\newblock {\em Gravitational waves in general relativity. III. Exact plane waves},
\newblock {Proceedings of the Royal Society of London. Series A. Mathematical and Physical Sciences 251, no. 1267 (1959): 519-533.}

\bibitem{bib:BelinskiZakharov1}
Vladimir Belinski and Vladimir Zakharov
\newblock {\em Integration of the Einstein Equations by Means of the Inverse Scattering Problem Technique and Construction of Exact Soliton Solutions}
\newblock {Sov. Phys. JETP 48(6) (1978)}

\bibitem{bib:BelinskiZakharov2}
Vladimir Belinski and Vladimir Zakharov
\newblock {\em Stationary Gravitational Solitons with Axial Symmetry}
\newblock {Sov. Phys. JETP 50(1) (1979)}

\bibitem{bib:Griffiths}
Jerry B. Griffiths
\newblock {\em Colliding plane waves in general relativity}
\newblock {Clarendon Press (1991)}

\bibitem{bib:Steele}
J. D. Steele
\newblock {\em On generalised pp waves}
\newblock {(2005)}

\bibitem{bib:TheoryOfSolitons}
S. Novikov, S. Manakov, L. Pitaevskii, V. Zakharov,
\newblock {\em Theory of solitons: the inverse scattering method},
\newblock{Springer (1984)}

\bibitem{bib:LIGO}
Alex Abramovici, William Althouse et al,
\newblock {\em LIGO: the laser interferometer gravitational-wave observatory}
\newblock {Science (New York, NY) 256, no. 5055 (1992): 325}

\bibitem{bib:ReggeWheeler}
Tullio Regge and John Wheeler,
\newblock {\em Stability of a Schwarzschild singularity},
\newblock {Physical Review 108, no. 4 (1957): 1063-1069}

\bibitem{bib:LL}
Lev Landau and Evgenii Lifshits,
\newblock {\em The classical theory of fields. Vol. 2},
\newblock {Butterworth-Heinemann, 1975}

\end{thebibliography}
\end{document}